\def\endthebibliography{%
  \def\@noitemerr{\@latex@warning{Empty `thebibliography' environment}}%
  \endlist
}
\providecommand{\leftsquigarrow}{%
  \mathrel{\mathpalette\reflect@squig\relax}%
}
\newcommand{\reflect@squig}[2]{%
  \reflectbox{$\m@th#1\rightsquigarrow$}%
}
\newcommand{\newalgname}[1]{%
  \renewcommand{\ALG@name}{#1}%
}
\newcommand {\C} {{\rm I\kern-5.5pt C}}
\def\centerhack#1{\hbox to 0pt{\hss\footnotesize #1\hss}}
\def\centerhackn#1{\hbox to 0pt{\hss #1\hss}}
\def\dchack#1{\vbox to 0pt{\vss{\hbox to 0pt{\hss#1\hss}}\vss}}
\newcounter{subeqn} %
\newcounter{mysub}
\newtheorem{lem}{Lemma}
\newtheorem{thm}{Theorem}
\newtheorem*{proposition1.1}{Proposition 1.1}
\newtheorem*{proposition1.2}{Proposition 1.2}
\newtheorem*{proposition1.3}{Proposition 1.3}
\newtheorem*{proposition2.1}{Proposition 2.1}
\newtheorem*{proposition2.2}{Proposition 2.2}
\newcommand{\linebreakand}{%
  \end{@IEEEauthorhalign}
  \hfill\mbox{}\par
  \mbox{}\hfill\begin{@IEEEauthorhalign}
}
\begin{document}

\title{Resource Allocation in Large Language Model Integrated 6G Vehicular Networks}

\author{\IEEEauthorblockN{Chang Liu and Jun Zhao\\}
\IEEEauthorblockA{{Nanyang Technological University, Singapore}\\
liuc0063@e.ntu.edu.sg, junzhao@ntu.edu.sg }
}
\maketitle

\thispagestyle{fancy}
\pagestyle{fancy}
\lhead{This paper appears in the 2024 IEEE 99th Vehicular Technology Conference (VTC).}
\cfoot{\thepage}
\renewcommand{\headrulewidth}{0.4pt}
\renewcommand{\footrulewidth}{0pt}

\begin{abstract}
In the upcoming 6G era, vehicular networks are shifting from simple Vehicle-to-Vehicle (V2V) communication to the more complex Vehicle-to-Everything (V2X) connectivity. At the forefront of this shift is the incorporation of Large Language Models (LLMs) into vehicles. Known for their sophisticated natural language processing abilities, LLMs change how users interact with their vehicles. This integration facilitates voice-driven commands and interactions, departing from the conventional manual control systems.
However, integrating LLMs into vehicular systems presents notable challenges. The substantial computational demands and energy requirements of LLMs pose significant challenges, especially in the constrained environment of a vehicle. Additionally, the time-sensitive nature of tasks in vehicular networks adds another layer of complexity.
In this paper, we consider an edge computing system where vehicles process the initial layers of LLM computations locally, and offload the remaining LLM computation tasks to the Roadside Units (RSUs), envisioning a vehicular ecosystem where LLM computations seamlessly interact with the ultra-low latency and high-bandwidth capabilities of 6G networks.
To balance the trade-off between completion time and energy consumption, we formulate a multi-objective optimization problem to minimize the total cost of the vehicles and RSUs.
The problem is then decomposed into two sub-problems, which are solved by sequential quadratic programming (SQP) method and fractional programming technique.
The simulation results clearly indicate that the algorithm we have proposed is highly effective in reducing both the completion time and energy consumption of the system. 
\end{abstract}

\begin{IEEEkeywords}
Large language model, vehicular networks, 6G, task offloading, edge computing.
\end{IEEEkeywords}

\section{Introduction}
With the advent of 6G networks, the integration of advanced computational systems, specifically LLMs and vehicular networks, marks a significant step forward.
It is primarily motivated by the goal of enhancing the driving experience in anticipation of the 6G network era. These advanced AI systems introduce a new era of driver-vehicle interaction, moving away from conventional manual controls towards more intuitive, voice-driven interfaces~\cite{wen2024road}. By enabling natural language communication, LLMs significantly reduce cognitive distractions, allowing drivers to maintain focus on the road. 
This improvement helps make driving safer and also makes cars easier for more people to drive, paving the way for inclusive mobility in the 6G era.

The implications of LLM integration play a vital role in the development of autonomous driving technologies~\cite{cui2023drivellm}, which are expected to be further empowered by 6G networks. 
LLMs can aid in real-time decision-making and understanding complex road scenarios, which are essential for the safe operation of autonomous vehicles in the 6G-connected environment~\cite{cui2024survey}. Additionally, these models offer significant contributions to vehicle maintenance and diagnostics, predicting and notifying maintenance needs, thereby preventing breakdowns and extending vehicle lifespan. The economic and environmental benefits are also noteworthy, with efficient route planning and predictive maintenance contributing to reduced fuel consumption and lower emissions. In essence, the integration of LLMs aims not only to enhance user experience and safety but also to propel the automotive industry towards a more autonomous and sustainable future.

The integration of LLMs into vehicular technology, while offering numerous benefits, presents a significant challenge in terms of computational resource requirements. 
These sophisticated models demand extensive processing power to function effectively, a demand that becomes particularly pronounced in the context of vehicular systems. 
Operating such advanced AI models requires substantial computational resources, which can be both time-consuming and energy-intensive~\cite{cai2023efficient}, a challenge that will be exacerbated in the context of 6G-enabled vehicular networks. 
In a vehicular environment, where the availability of computational resources is inherently limited, and energy efficiency is paramount, this high demand for resources poses a unique challenge.

The energy consumption associated with running these models is a critical concern in 6G-enabled vehicular networks, where energy efficiency directly impacts the vehicle's range and overall performance. 
To address this, one common solution is leveraging edge computing by deploying RSUs~\cite{ning2020intelligent,wang2018offloading}. 
Edge computing can facilitate immediate decision-making by empowering the network edge with intelligence~\cite{zhang2020data}. 
Within vehicular networks, edge computing enables vehicles to leverage the higher computational power of these RSUs, reducing the processing burden on the vehicle itself. This not only alleviates the strain on the vehicle's limited computational resources but also helps in conserving energy, thereby enhancing the overall efficiency and sustainability of vehicular systems. Edge computing with RSUs thus emerges as a pivotal strategy in the optimization of resource allocation, ensuring that the integration of LLMs in vehicles is both technically feasible and energy-efficient.

In this paper, we explore an edge computing vehicular framework where vehicles are equipped to handle the initial several layers of LLM computations locally, while offloading the more intensive remaining LLM computation tasks to RSUs. 
This approach aims to leverage the computational capabilities of both vehicles and network edges effectively. 
To achieve an optimal balance between the completion time of tasks and the overall energy consumption, we develop a multi-objective optimization framework tailored to the requirements of 6G-enabled vehicular networks.
This framework is specifically designed to minimize the cumulative cost incurred by both vehicles and RSUs in processing these computational tasks.
The intricate nature of the optimization problem prompts us to divide it into two sub-problems for more efficient resolution. 
The first sub-problem is addressed using the Sequential Quadratic Programming (SQP) method. 
The second sub-problem is tackled using the fractional programming technique, which is particularly adept at dealing with ratio-based optimization problems.

The main contributions are summarized as follows:
\begin{itemize}
    \item We introduce a novel concept for vehicular networks, integrating LLMs to enhance their capabilities. In this proposed system, the vehicles are configured to process the initial layers of LLM computations locally. By leveraging edge computing,  more demanding and resource-intensive portions of LLM computations are offloaded to RSUs.
    \item We propose an optimization strategy for vehicular networks using LLMs, focusing on optimizing the number of LLM layers offloaded to RSUs, the transmission power and GPU frequency of vehicles, and the bandwidth and GPU frequency allocation for RSUs, taking into account the unique requirements of 6G-enabled vehicular networks.
    This novel approach uniquely balances computational loads and energy consumption across the network, improving the system's efficiency.
    \item To tackle the non-convex problem presented by the intricate nature of the optimization challenge, we partition it into two sub-problems. The first sub-problem is addressed using the Sequential Quadratic Programming (SQP) method. 
    For the second sub-problem, we employ the fractional programming technique.
    This two-pronged strategy allows for a more effective solution to the complex optimization problem.
\end{itemize}

\section{System Model}
In this section, we delve into the system model in detail.
First, we explore the methodology for calculating both time and energy consumption associated with the Large LLM within vehicular networks. 
Following that, we then break down the computation models for two key components: the vehicle computation model and the RSU computation model.

\subsection{Time and Energy Calculation}
Firstly, we describe how to calculate the time and energy consumption for the LLM.
Following the derivation from Narayanan \textit{et al.}~\cite{narayanan2021efficient}, we only consider the matrix multiplications which are the main contributors to computation cost.
Assume 
the dimensionality of the hidden states is $h$; 
the length of input tokens is $d_n$, and batch size is $B$.
A transformer layer is built from two primary components: the attention mechanism and a feed-forward network. 
The attention mechanism's computational demands arise from transformations of the key, query, and value ($6Bd_nh^2$ operations), the attention matrix's computation ($2Bd_n^2h$ operations), the application of attention to values ($2Bd_n^2h$ operations), and a subsequent linear transformation ($2Bd_nh^2$ operations). 
The feed-forward network modifies the hidden size, first expanding it to $4h$ and then contracting it back to $h$, which requires $16Bd_nh^2$ FLOPs.
Combining these, the forward pass of each transformer layer requires $24Bd_nh^2+4Bd_n^2h$ FLOPs. 
The backward pass doubles this computational cost because it calculates gradients for both the input and weight tensors.
Let $\psi(d_n)$ be the FLOPs per token required by vehicle $n$ for forward pass per transformer layer where $\psi(d_n) = 24Bd_nh^2+4Bd_n^2h$.

\subsection{Vehicle Computation Model}
When vehicles undertake the task of executing one transformer layer locally, the computation delay is expressed as follows:
\begin{align}
    T_n^{cmp}=\frac{\psi(d_n)}{f_n^\text{FL}},
\end{align}
where $f_n^\text{FL}$ is the number of FLOPs per cycle per core of the GPU at vehicle $n$, 
and $f_n^\text{FL}=f_n C_n^{V} D_n^{V}$ where $f_n$ is the GPU frequency of vehicle $n$, $C_n^{V}$ is the number of cores of the GPU at vehicle $n$ and $D_n^{V}$ is the number of FLOPs per cycle per core of the GPU at vehicle $n$.
According to Zeng \textit{et al.}~\cite{zeng2021energy} and Eyerman \textit{et al.}~\cite{eyerman2011fine}, the computing power has a cubic relationship with the GPU frequency where $\texttt{power}=\kappa_1f_n^3$ and $\kappa_1$ is coefficient $[\text{in Watt/(cycle/s)}^3]$ conditional on the chip architecture.
Thus, the energy consumption of local computation for one transformer layer can be formulated by:
\begin{align}
    E_n^{cmp} = \texttt{power} \times T_n^{cmp}=\frac{\kappa_1f_{n}^2\psi(d_n)}{ C_n^{V} D_n^{V}}. 
\end{align}

After local computing, the vehicles will send the intermediate results and labels to the RSUs to perform the remaining computation. 
Let $\chi_{n,m}$ denote the vehicle-to-RSU association, where $\chi_{n,m}=1$ indicates vehicle $n$ choose RSU $m$ to perform the remaining computation.
Otherwise, $\chi_{n,m}=0$.
We consider frequency-division multiple access (FDMA) in this paper, where communication among vehicles and RSUs would not interfere.
Let $p_n$ be the transmission power of the user $n$. 
According to the Shannon-Hartley theory~\cite{cover1999elements}, the transmission rate between vehicle $n$ and RSU $m$ can be formulated as:
\begin{align}
    r_{n,m} = b_n\log_2(1+\frac{g_{n,m}p_{n}}{\sigma^2b_n}),
\end{align}
where $\sigma^2$ is the noise power, $b_n$ is the  bandwidth allocated to vehicle $n$.
Therefore, the energy consumption for vehicle $n$ to transmit data with size $d_n$ is:
\begin{align}
    E_{n}^{com} =  \frac{p_nd_n}{ \sum_{m \in \mathcal{M}}\chi_{n,m} r_{n,m}}.
\end{align}
Assume there are $\Upsilon$ transformer layers in total in the LLM model, and vehicle $n$ executes the first $\alpha_n$ transformer layers locally.
Therefore, the cost of the inference for first $\alpha_n$ transformer layers at vehicle $n$ is:
\begin{align}
    Cost_{n}^{V}=\alpha_n \cdot(\omega_t T_n^{cmp} + \omega_e E_n^{cmp})+\omega_eE_n^{com},
\end{align}
where $\omega_t$ is the weight parameter that indicates the preference for delay, and $\omega_e$ is the weight parameter that indicates the preference for energy consumption.
The sum of the weight parameters is $1$, i.e., $\omega_t+ \omega_e =1$.
\subsection{RSU Computation Model}
Consider $f_{n,m}$, which serves as the designated frequency of the GPU at a specific RSU $m$, that is allocated to a vehicle $n$.
The computational capability of this setup, particularly in terms of Floating Point Operations per Second (FLOPS), is determined by the following equation:
\begin{align}
    f_{n,m}^\text{FL}=f_{n,m} C_m^{R} D_m^{R}.
\end{align}
In this formula, $C_m^{R}$ represents the total count of GPU cores that are installed and operational at RSU $m$. Additionally, $D_m^{R}$ refers to the number of floating-point operations that can be performed in a single cycle by each of these cores at RSU $m$.

Moving on to the computational delay, we specifically focus on the time required for processing the forward pass of data pertaining to vehicle $n$ at RSU $m$. 
This delay is calculated by the following expression:
\begin{align}
    T^{cmp}_{n,m} = \frac{\psi(d_n)}{f_{n,m}^\text{FL}}=\frac{\psi(d_n)}{f_{n,m} C_m^{R} D_m^{R}}.
\end{align}
Accordingly, concerning the energy consumption of RSU $m$ for executing the inference task for vehicle $n$, the energy consumption is given by:
\begin{align}
    E_{n,m}^{cmp} =\frac{ \kappa_2 f_{n,m}^2 \psi(d_n)}{C_m^{R}D_m^{R}},\\[-13pt]\nonumber
\end{align}
where $\kappa_2$ is the coefficient depending on the chip architecture.
It's important to note that our analysis deliberately excludes the energy consumption associated with the downlink transmission from RSUs to vehicles. 
This decision stems from the understanding that RSUs typically possess significantly more robust power capabilities when compared to vehicles. 
Moreover, when comparing this aspect with the significant energy requirements involved in training LLMs, it becomes apparent that the energy consumed for transmission is relatively minor for the RSUs.

For each vehicle $n$, $\alpha_n$ layers out of $\Upsilon$ layers are computed locally, and thus $(\Upsilon-\alpha_n)$ layers are computed at the corresponding RSU.
Then, the cost to process the inference task for the  vehicles at the RSU $m$ is given by the weighted sum of delay and energy consumption:
\begin{align}
    Cost_{m}^{R}= \sum_{n \in \mathcal{N}}\chi_{n,m}(\Upsilon-\alpha_n) (\omega_t T_{n,m}^{cmp}+\omega_e E_{n,m}^{cmp}),
\end{align}
where $\omega_t$ represents the weighting factor that signifies the inclination towards delay, and $\omega_e$ denotes the weighting factor that signifies the inclination towards energy consumption.

\subsection{Problem Formulation}
With the computation and communication model above, we then formulate the joint optimization problem that aims to minimize the system's cost, by optimizing the following variables: the number of transformer layers that execute locally: \mbox{$\boldsymbol{\alpha}:=[\alpha_n|_{n \in \mathcal{N}}]$}, 
the transmission power of the vehicles: $\boldsymbol{p}:=[p_n|_{n \in \mathcal{N}}]$, the bandwidth allocation: \mbox{$\boldsymbol{b}:=[b_n|_{n \in \mathcal{N}}]$}, the vehicle's GPU frequency: $\boldsymbol{f^V}:=[f_{n}|_{n \in \mathcal{N}}]$ and the RSUs' GPU frequency allocation: $\boldsymbol{f^R}:=[f_{n,m}|_{n \in \mathcal{N}, m \in \mathcal{M}}]$:
\begin{subequations}\label{P1}
\begin{align}
    &\text{Problem~}\mathbb{P}_1:\nonumber\\
    &\min_{\boldsymbol{\alpha,p,b,f^V,f^R}} \sum_{n \in \mathcal{N}} Cost_{n}^{V} + \sum_{m \in \mathcal{M}} Cost_{m}^{R}\tag{\ref{P1}},\nonumber\\
    &\text{~~~s.t.~}\alpha_n \in \{1,2,\ldots,\Upsilon\},\forall n \in \mathcal{N},\label{P1_C_alpha}\\
    &\phantom{~~~s.t.~}p_n \leq p_n^{max},\forall n \in \mathcal{N},\label{P1_C_pn}\\
    &\phantom{~~~s.t.}\sum_{n \in \mathcal{N}} \chi_{n,m} b_n=b_m^{max},\forall m \in \mathcal{M},\label{P1_C_bn_2}\\
    &\phantom{~~~s.t.~} f_n \leq f_n^{max},\forall n \in \mathcal{N},\label{P1_C_f_n}\\
    &\phantom{~~~s.t.}\sum_{n \in \mathcal{N}}\chi_{n,m} f_{n,m}=f_m^{max},\forall m \in \mathcal{M}.\label{P1_C_fnm_2}
\end{align}
\end{subequations}


The formulated problem falls into the category of Mixed Integer Non-linear Programming (MINLP) problem.
This classification arises due to the inclusion of both integer-valued decision variables and non-linear terms involving products of variables, a combination that inherently induces non-convexity in the problem space.
The non-convex nature of this problem makes it especially challenging to solve because it cannot be addressed using standard optimization methods, which typically rely on the problem being convex.

\textbf{The roadmap to solve problem $\mathbb{P}_1$:}
In order to tackle the non-convex problem, we decompose the original into 2 sub-propblems.
Following this, we utilize Alternating Optimization (AO) by optimizing $\boldsymbol{\alpha,f^V,f^R}$ and $\boldsymbol{p,b}$ iteratively.
The two AO steps are described as follows:
\begin{enumerate}
\item In the first step of AO, we solve Sub-problem \uppercase\expandafter{\romannumeral1}.
Specifically, we fix $\boldsymbol{p,b}$ and utilize Sequential Quadratic Programming (SQP) method to optimize $\boldsymbol{\alpha,f^V,f^R}$ by transforming the non-convex problem into a convex one.
\item In the second step of AO, we solve Sub-problem \uppercase\expandafter{\romannumeral2}. In this sub-problem, given $\boldsymbol{\alpha,f^V,f^R}$, the method of fractional programming is adopted to facilitate the solution to $\boldsymbol{p,b}$.
\end{enumerate}
\section{Proposed Algorithm to Solve the Problem}
In this section, we present well-structured solutions to the two sub-problems.
\subsection{Sub-problem \uppercase\expandafter{\romannumeral1}}
In Sub-problem \uppercase\expandafter{\romannumeral1}, we fix $\boldsymbol{p,b}$ and optimize $\boldsymbol{\alpha,f^V,f^R}$.
The discrete variable $\alpha_n$ is difficult to handle.
Thus, we first relax $\alpha_n$ to continuous variables, which will be rounded back to the nearest integer later.
Thus, Sub-problem \uppercase\expandafter{\romannumeral1} is given by:
\begin{subequations}\label{sub-problem1}
\begin{align}
    &\text{Sub-problem \uppercase\expandafter{\romannumeral1}:}\nonumber\\&\min_{\boldsymbol{\alpha,f^V,f^R}} F(\boldsymbol{\alpha,f^V,f^R})\!=\!\!\sum_{n \in \mathcal{N}} \alpha_n \cdot(\omega_t T_n^{cmp} + \omega_e E_n^{cmp})+\nonumber\\&~~~~\sum_{m \in \mathcal{M}}\sum_{n \in \mathcal{N}}\chi_{n,m}(\Upsilon-\alpha_n) (\omega_t T_{n,m}^{cmp}+\omega_e E_{n,m}^{cmp}),\tag{\ref{sub-problem1}}\\
    &~~~~~\text{s.t.~} 1 \leq \alpha_n \leq \Upsilon, \forall n \in \mathcal{N},\label{15a}\\
    &~~~~~~~~~~~\text{(\ref{P1_C_f_n}),~(\ref{P1_C_fnm_2})}.\nonumber    
\end{align}
\end{subequations}
The challenge in solving Sub-problem \uppercase\expandafter{\romannumeral1} arises primarily from the multiplication between $\boldsymbol{\alpha,f^V}$ and $\boldsymbol{\alpha,f^R}$.
This multiplication creates a non-linear interaction between the variables, making the problem inherently non-convex. 
To tackle this problem, in this paper, we utilize Sequential quadratic programming (SQP) technique.
SQP is an iterative method typically used for solving constrained non-linear optimization problems~\cite{boggs1995sequential}.
It can also be effective for certain non-convex problems. 
The key idea behind SQP involves approximating the non-linear problem by a quadratic programming subproblem at each iteration, solving this subproblem, and then updating the solution. 

We begin by defining the Lagrangian function for Sub-problem \uppercase\expandafter{\romannumeral1}:

\begin{align}
    &\mathcal{L}(\boldsymbol{\alpha,f^V,f^R,\lambda,\mu,\gamma,\sigma})=\!\!\sum_{n \in \mathcal{N}} \alpha_n \!\cdot(\omega_t T_n^{cmp} \!+\! \omega_e E_n^{cmp})+\nonumber\\&
    \sum_{m \in \mathcal{M}}\sum_{n \in \mathcal{N}}\chi_{n,m}(\Upsilon-\alpha_n) (\omega_t T_{n,m}^{cmp}+\omega_e E_{n,m}^{cmp})+ \nonumber\\&\sum_{n \in \mathcal{N}} \lambda_n (1\!-\!\alpha_n) \!+\!\sum_{n \in \mathcal{N}} \mu_n (\alpha_n\!-\!\Upsilon) \!+ \! \sum_{n \in \mathcal{N}} \gamma_n (f_n\!-\!f_n^{max}) + \nonumber\\&\sum_{m \in \mathcal{M}}\sigma_m ( \sum_{n \in \mathcal{N}}\chi_{n,m}f_{n,m}-f_m^{max}).
\end{align}
Next, the partial derivative of the Lagrangian function is given by:
\begin{align}
    \Delta\mathcal{L} = [\frac{\partial \mathcal{L}}{\alpha_n} ~\frac{\partial \mathcal{L}}{f_n} ~\frac{\partial \mathcal{L}}{f_{n,m}} ~\frac{\partial \mathcal{L}}{\lambda_n} ~\frac{\partial \mathcal{L}}{\mu_n} ~\frac{\partial \mathcal{L}}{\gamma_n} ~\frac{\partial \mathcal{L}}{\sigma_m}]^\intercal,
\end{align}
where
\\[-5pt]
\begin{align}
    &\frac{\partial \mathcal{L}}{\partial \alpha_n} = \omega_tT_n^{cmp}+\omega_eE_n^{cmp} - \nonumber \\ &~~~~\sum_{m \in \mathcal{M}}\chi_{n,m}(\omega_t T_{n,m}^{cmp}+\omega_e E_{n,m}^{cmp})-\lambda_n + \mu_n,\\
    &\frac{\partial \mathcal{L}}{\partial f_n} = -\frac{\alpha_n \omega_t \psi(d_n)}{C_n^V D_n^V f_n^2} + \frac{2\alpha_n \omega_e \kappa_1 f_n \psi(d_n)}{C_n^V D_n^V}+\gamma_n, \\
    &\frac{\partial \mathcal{L}}{\partial f_{n,m}} = \chi_{n,m} (\Upsilon \!-\!\alpha_n) \Big(\frac{2\omega_e \kappa_2 f_{n,m}\psi(d_n)}{C_m^R D_m^R}-\frac{\omega_t \psi(d_n)}{C_m^R D_m^R f_{n,m}^2}\Big) \nonumber \\
    &~~~~~~~~~~~~~+ \sigma_m \chi_{n,m},\\
    &\frac{\partial \mathcal{L}}{\partial \lambda_n} = 1-\alpha_n,\\
    &\frac{\partial \mathcal{L}}{\partial \mu_n} = \alpha_n- \Upsilon,\\
    &\frac{\partial \mathcal{L}}{\partial \gamma_n} = f_n - f_n^{max},\\
    &\frac{\partial \mathcal{L}}{\partial \sigma_m} = \sum_{n \in \mathcal{N}}\chi_{n,m}f_{n,m}-f_m^{max}.
\end{align}
The Hessian of the Lagrangian is calculated as (\ref{Hessian}), where
\begin{align}
    &\frac{\partial^2 \mathcal{L}}{\partial \alpha_n^2} = 0,~\frac{\partial^2 \mathcal{L}}{\partial \alpha_n \partial \lambda_n} = -1,~\frac{\partial^2 \mathcal{L}}{\partial \alpha_n \partial \mu_n} = 1,\\
    &\frac{\partial^2 \mathcal{L}}{\partial \alpha_n \partial f_n} = -\frac{ \omega_t \psi(d_n)}{C_n^V D_n^V f_n^2} + \frac{2 \omega_e \kappa_1 f_n \psi(d_n)}{C_n^V D_n^V},\\
    &\frac{\partial^2 \mathcal{L}}{\partial \alpha_n \partial f_{n,m}} = - \chi_{n,m}\Big(\frac{2\omega_e \kappa_2 f_{n,m}\psi(d_n)}{C_m^R D_m^R}-\frac{\omega_t \psi(d_n)}{C_m^R D_m^R f_{n,m}^2}\Big),\\
      &\frac{\partial^2 \mathcal{L}}{\partial \alpha_n \partial \gamma_n} = \frac{\partial^2 \mathcal{L}}{\partial \alpha_n \partial \sigma_m} = 0,\\
      &\frac{\partial^2 \mathcal{L}}{\partial f_n^2} = \frac{2\alpha_n \omega_t \psi(d_n)}{C_n^V D_n^V f_n^3} \!+\! \frac{2\alpha_n \omega_e \kappa_1 \psi(d_n)}{C_n^V D_n^V}, \\
      &\frac{\partial^2 \mathcal{L}}{\partial f_n \partial \gamma_n} = 1,\\
      &\frac{\partial^2 \mathcal{L}}{\partial f_n \partial f_{n,m}} = \frac{\partial^2 \mathcal{L}}{\partial f_n \partial \lambda_n} = \frac{\partial^2 \mathcal{L}}{\partial f_n \partial \mu_n} = \frac{\partial^2 \mathcal{L}}{\partial f_n \partial \sigma_m} = 0,\\
      &\frac{\partial^2 \mathcal{L}}{\partial f_{n,m}^2} = \chi_{n,m} (\Upsilon -\alpha_n) \Big(\frac{2\omega_e \kappa_2 \psi(d_n)}{C_m^R D_m^R}+\frac{2\omega_t \psi(d_n)}{C_m^R D_m^R f_{n,m}^3}\Big), \\
    &\frac{\partial^2 \mathcal{L}}{\partial f_{n,m} \partial \sigma_m} = \chi_{n,m},\\
    &\frac{\partial^2 \mathcal{L}}{\partial f_{n,m} \partial \lambda_n} = \frac{\partial^2 \mathcal{L}}{\partial f_{n,m} \partial \mu_n} = \frac{\partial^2 \mathcal{L}}{\partial f_{n,m} \partial \gamma_n} = 0,\\
    &\frac{\partial^2 \mathcal{L}}{\partial \lambda_n^2} =\frac{\partial^2 \mathcal{L}}{\partial \mu_n^2} = \frac{\partial^2 \mathcal{L}}{\partial \gamma_n^2} = \frac{\partial^2 \mathcal{L}}{\partial \sigma_m^2} = 0,\\
    &\frac{\partial^2 \mathcal{L}}{\partial \lambda_n \partial \mu_n} = \frac{\partial^2 \mathcal{L}}{\partial \lambda_n \partial \gamma_n} =\frac{\partial^2 \mathcal{L}}{\partial \lambda_n \partial \sigma_m} =0,\\
    &\frac{\partial^2 \mathcal{L}}{\partial \mu_n \partial \gamma_n} =\frac{\partial^2 \mathcal{L}}{\partial \mu_n \partial \sigma_m} = \frac{\partial^2 \mathcal{L}}{\partial \gamma_n \partial \sigma_m} = 0.\\[-13pt]\nonumber
\end{align}
\begin{figure*}
\begin{small}
\begin{align}
 \Delta^2\mathcal{L} = 
        &\left[ \begin{array}{ccccccc}
            \frac{\partial^2 \mathcal{L}}{\partial \alpha_n^2} & \frac{\partial^2 \mathcal{L}}{\partial \alpha_n \partial f_n} & \frac{\partial^2 \mathcal{L}}{\partial \alpha_n \partial f_{n,m}}
            & \frac{\partial^2 \mathcal{L}}{\partial \alpha_n \partial \lambda_n}
            & \frac{\partial^2 \mathcal{L}}{\partial \alpha_n \partial \mu_n}
            & \frac{\partial^2 \mathcal{L}}{\partial \alpha_n \partial \gamma_n}
            &\frac{\partial^2 \mathcal{L}}{\partial \alpha_n \partial \sigma_m}\\[7pt]
            \frac{\partial^2 \mathcal{L}}{\partial f_n \partial \alpha_n} & \frac{\partial^2 \mathcal{L}}{\partial  f_n^2} & \frac{\partial^2 \mathcal{L}}{\partial f_n \partial f_{n,m}}
            & \frac{\partial^2 \mathcal{L}}{\partial f_n \partial \lambda_n}
            & \frac{\partial^2 \mathcal{L}}{\partial f_n \partial \mu_n}
            & \frac{\partial^2 \mathcal{L}}{\partial f_n \partial \gamma_n}
            &\frac{\partial^2 \mathcal{L}}{\partial f_n \partial \sigma_m}\\[7pt]
            \frac{\partial^2 \mathcal{L}}{\partial f_{n,m} \partial \alpha_n } & \frac{\partial^2 \mathcal{L}}{\partial f_{n,m} \partial f_n} & \frac{\partial^2 \mathcal{L}}{\partial f_{n,m}^2}
            & \frac{\partial^2 \mathcal{L}}{\partial f_{n,m} \partial \lambda_n}
            & \frac{\partial^2 \mathcal{L}}{\partial f_{n,m} \partial \mu_n}
            & \frac{\partial^2 \mathcal{L}}{\partial f_{n,m} \partial \gamma_n}
            &\frac{\partial^2 \mathcal{L}}{\partial f_{n,m} \partial \sigma_m}
            \\[7pt]
            \frac{\partial^2 \mathcal{L}}{\partial \lambda_n \partial \alpha_n} & \frac{\partial^2 \mathcal{L}}{\partial \lambda_n \partial f_n} & \frac{\partial^2 \mathcal{L}}{\partial \lambda_n \partial f_{n,m}}
            & \frac{\partial^2 \mathcal{L}}{\partial \lambda_n^2}
            & \frac{\partial^2 \mathcal{L}}{\partial \lambda_n \partial \mu_n}
            & \frac{\partial^2 \mathcal{L}}{\partial \lambda_n \partial \gamma_n}
            &\frac{\partial^2 \mathcal{L}}{\partial \lambda_n \partial \sigma_m}
            \\[7pt]
            \frac{\partial^2 \mathcal{L}}{\partial \mu_n \partial \alpha_n} & \frac{\partial^2 \mathcal{L}}{\partial \mu_n \partial f_n} & \frac{\partial^2 \mathcal{L}}{\partial \mu_n \partial f_{n,m}}
            & \frac{\partial^2 \mathcal{L}}{\partial \mu_n \partial \lambda_n}
            & \frac{\partial^2 \mathcal{L}}{\partial \mu_n^2}
            & \frac{\partial^2 \mathcal{L}}{\partial \mu_n \partial \gamma_n}
            &\frac{\partial^2 \mathcal{L}}{\partial \mu_n \partial \sigma_m}
            \\[7pt]
            \frac{\partial^2 \mathcal{L}}{\partial \gamma_n \partial \alpha_n} & \frac{\partial^2 \mathcal{L}}{\partial  \gamma_n \partial f_n} & \frac{\partial^2 \mathcal{L}}{\partial \gamma_n \partial f_{n,m}}
            & \frac{\partial^2 \mathcal{L}}{\partial \gamma_n \partial \lambda_n}
            & \frac{\partial^2 \mathcal{L}}{\partial \gamma_n \partial \mu_n}
            & \frac{\partial^2 \mathcal{L}}{\partial \gamma_n^2}
            &\frac{\partial^2 \mathcal{L}}{\partial \gamma_n \partial \sigma_m}
            \\[7pt]
            \frac{\partial^2 \mathcal{L}}{\partial \sigma_m \partial \alpha_n} & \frac{\partial^2 \mathcal{L}}{\partial  \sigma_m \partial f_n} &\frac{\partial^2 \mathcal{L}}{\partial \sigma_m \partial f_{n,m}}
            & \frac{\partial^2 \mathcal{L}}{\partial \sigma_m \partial \lambda_n}
            & \frac{\partial^2 \mathcal{L}}{\partial \sigma_m \partial \mu_n}
            & \frac{\partial^2 \mathcal{L}}{\partial \sigma_m \partial \gamma_n}
            &\frac{\partial^2 \mathcal{L}}{\partial \sigma_m^2}
            \end{array} 
        \right ].\label{Hessian}  \\[-26pt]\nonumber
\end{align}
\end{small}
\end{figure*}
Let $g_{n}(\alpha_n)=1-\alpha_n$, $h_{n}(\alpha_n)=\alpha_n-\Upsilon$, $k_{n}(f_n)=f_n - f_n^{max}$, and $q_{m}(f_{n,m})=\sum_{n \in \mathcal{N}}\chi_{n,m}f_{n,m}-f_m^{max}$.
Then, we form a quadratic approximation of (\ref{sub-problem1}) and linearize the constraints in (\ref{P1_C_f_n}),~(\ref{P1_C_fnm_2}) and (\ref{15a}) as follows:
\begin{subequations}\label{SQP_problem}
\begin{align}
    &\min_{\boldsymbol{d}} F(\boldsymbol{\alpha}^{(t)},\boldsymbol{f^V}^{(t)},\boldsymbol{f^R}^{(t)}) + \Delta F(\boldsymbol{\alpha}^{(t)},\boldsymbol{f^V}^{(t)},\boldsymbol{f^R}^{(t)})^\intercal \boldsymbol{d} \nonumber\\
    &+\frac{1}{2} \boldsymbol{d}^\intercal \Delta^2 \mathcal{L}(\boldsymbol{\alpha}^{(t)},\boldsymbol{f^V}^{(t)}\!,\boldsymbol{f^R}^{(t)}\!,\boldsymbol{\lambda}^{(t)}\!,\boldsymbol{\mu}^{(t)}\!,\boldsymbol{\gamma}^{(t)}\!,\boldsymbol{\sigma}^{(t)})\boldsymbol{d},\tag{\ref{SQP_problem}}\\
    &~~~~\text{s.t.~} \Delta g_n(\alpha_n^{(t)}) \boldsymbol{d} + g_n(\alpha_n^{(t)}) \leq 0,\\
    &~~~~\phantom{s.t.~}\Delta h_n(\alpha_n^{(t)}) \boldsymbol{d} + h_n(\alpha_n^{(t)}) \leq 0,\\
    &~~~~\phantom{s.t.~}\Delta k_n(f_n^{(t)}) \boldsymbol{d} + k_n(f_n^{(t)}) \leq 0,\\
    &~~~~\phantom{s.t.~}\Delta q_m(f_{n,m}^{(t)}) \boldsymbol{d} + q_m(f_{n,m}^{(t)}) \leq 0.
\end{align}
\end{subequations}
It is clear that this problem is a convex optimization problem and can be easily solved by standard convex solvers such as CVX~\cite{grant2014cvx}.
After obtaining the step direction $\boldsymbol{d}^{(t)}$, we update the variables as:
\begin{align}
    &\boldsymbol{\alpha}^{(t+1)} = \boldsymbol{\alpha}^{(t)} + \boldsymbol{d}^{(t)}, \boldsymbol{f^V}^{(t+1)} = \boldsymbol{f^V}^{(t)} + \boldsymbol{d}^{(t)},\\
    &\boldsymbol{f^R}^{(t+1)} = \boldsymbol{f^R}^{(t)} + \boldsymbol{d}^{(t)}, \boldsymbol{\lambda}^{(t+1)} = \boldsymbol{\lambda}^{(t)} + \boldsymbol{d}^{(t)},\\
    &\boldsymbol{\mu}^{(t+1)} = \boldsymbol{\mu}^{(t)} + \boldsymbol{d}^{(t)}, \boldsymbol{\gamma}^{(t+1)} =\boldsymbol{\gamma}^{(t)} + \boldsymbol{d}^{(t)},\\
    &\boldsymbol{\sigma}^{(t+1)} =\boldsymbol{\sigma}^{(t)} + \boldsymbol{d}^{(t)}.
\end{align}
 This iterative process involves repeatedly adjusting the variables in the direction of $\boldsymbol{d}^{(t)}$, which is calculated to move towards the optimal solution. With each iteration, the variables are refined, gradually leading the system closer to the desired state of convergence.
\subsection{Sub-problem \uppercase\expandafter{\romannumeral2}}
With fixed $\boldsymbol{\alpha,f^V,f^R}$, problem $\mathbb{P}_1$ can be rewritten as:
\begin{align}
    \text{Sub-problem \uppercase\expandafter{\romannumeral2}:}~&\min_{\boldsymbol{p,b}} \sum_{n \in \mathcal{N}} \omega_e E_{n}^{com},\\
    &\text{s.t.~(\ref{P1_C_pn}),~(\ref{P1_C_bn_2})}.\nonumber    
\end{align}
This optimization problem presents itself as a sum-of-ratios minimization problem, characterized by its intrinsic complexity and non-linearity. 
Owing to its NP-hard nature, the problem poses significant computational difficulties, rendering conventional optimization methodologies largely ineffective.
To handle this problem, we transform it into an epigraph form by introducing an auxiliary variable $\beta_{n}$.
Specifically, we let 
\begin{align}
 \frac{p_nd_n}{\sum_{m \in \mathcal{M}} \chi_{n,m} r_{n,m}} \leq \beta_n. \label{beta}\\[-13pt]\nonumber
\end{align}
Then, Sub-problem \uppercase\expandafter{\romannumeral2} is equivalent to the following optimization problem:
\begin{subequations}\label{SP1_v2}
\begin{align}
    &\min_{\boldsymbol{p,b,\beta}} \sum_{n \in \mathcal{N}} \omega_e \beta_{n} \tag{\ref{SP1_v2}},\\
    & \text{s.t.}~p_n d_n-\!\sum_{m \in \mathcal{M}}\!\chi_{n,m} r_{n,m}\beta_n  \leq 0, \label{C_beta} \\
    &~~~~~\text{(\ref{P1_C_pn}),~(\ref{P1_C_bn_2})},\nonumber
\end{align}
\end{subequations}
where constraint (\ref{C_beta}) comes from (\ref{beta}).
The problem (\ref{SP1_v2}) is still non-convex and difficult to handle due to constraint (\ref{C_beta}).
To tackle the above problem, we next give the following lemma:
\begin{lem}
Let $R(p_n,b_n)=\sum_{m \in \mathcal{M}} \chi_{n,m} r_{n,m}$.
Then, $R(p_n,b_n)$ is concave with respect to $p_n$ and $b_n$.
\end{lem}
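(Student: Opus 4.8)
The plan is to reduce the lemma to the standard fact that perspective functions preserve concavity. First I would expand the definition of $r_{n,m}$ to write
\[
R(p_n,b_n)=\sum_{m\in\mathcal{M}}\chi_{n,m}\,\phi_m(p_n,b_n),\qquad
\phi_m(p_n,b_n)=b_n\log_2\!\Big(1+\frac{g_{n,m}p_n}{\sigma^2 b_n}\Big),
\]
where the coefficients $\chi_{n,m}\in\{0,1\}$ are nonnegative. Because a nonnegative weighted sum of concave functions is concave, it suffices to establish joint concavity of each summand $\phi_m$ on the domain $p_n\ge 0$, $b_n>0$.

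Next I would identify $\phi_m$ as a perspective function. Setting $c_m:=g_{n,m}/\sigma^2>0$ and $\ell(x):=\log_2(1+c_m x)$, the scalar map $\ell$ is concave since $\ell''(x)=-c_m^2/[\ln 2\,(1+c_m x)^2]<0$. The key observation is the identity $\phi_m(p_n,b_n)=b_n\,\ell(p_n/b_n)$, which exhibits $\phi_m$ as the perspective of $\ell$. Since the perspective operation $g(x,t)=t\,f(x/t)$ preserves concavity for $t>0$ whenever $f$ is concave, $\phi_m$ is jointly concave in $(p_n,b_n)$, and the lemma follows.

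Should a self-contained verification be preferred over invoking the perspective rule, I would instead compute the Hessian directly. A short differentiation yields
\[
\nabla^2\phi_m=\frac{c_m^2}{\ln 2\,(b_n+c_m p_n)^2}
\begin{pmatrix}-b_n & p_n\\[2pt] p_n & -p_n^2/b_n\end{pmatrix}.
\]
The scalar prefactor is positive, the top-left entry $-b_n$ is negative, and the determinant of the $2\times 2$ matrix equals $p_n^2-p_n^2=0$; hence its eigenvalues are $0$ and $-(b_n+p_n^2/b_n)<0$, so $\nabla^2\phi_m\preceq 0$ and $\phi_m$ is concave.

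I do not expect a substantive obstacle, as the argument is essentially routine. The only point demanding care is the domain restriction $b_n>0$, under which both the perspective identity and the Hessian formula are valid; this is guaranteed here because each vehicle associated to an RSU receives a strictly positive share of the bandwidth budget $b_m^{max}$ through constraint~(\ref{P1_C_bn_2}). I would also remark that, since $\chi_{n,m}$ is binary and selects a single serving RSU, the outer sum reduces to one active term, so no additional subtlety arises from the summation.
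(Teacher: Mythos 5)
Your proof is correct, but your primary argument takes a genuinely different route from the paper. The paper proves the lemma by brute force: it writes out the full Hessian of $R(p_n,b_n)$ and then shows negative semidefiniteness by evaluating the quadratic form, recognizing $\boldsymbol{x}^\intercal \,Hessian(R)\, \boldsymbol{x}$ as a nonpositive multiple of the square $(b_n x_1 - p_n x_2)^2$ summed over $m$. You instead decompose $R$ into per-RSU terms and observe that each term $b_n \ell(p_n/b_n)$, with $\ell(x)=\log_2(1+c_m x)$ concave, is the perspective of a scalar concave function, so joint concavity follows from the perspective rule and closure of concavity under nonnegative sums. This is shorter, avoids all bivariate second-derivative computations, and generalizes immediately to any concave rate function; what it costs is reliance on a cited result from convex analysis rather than a self-contained calculation. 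Your fallback Hessian computation is essentially the paper's own argument, just organized per summand and closed via the eigenvalues (one zero, one equal to the negative trace) instead of the explicit sum-of-squares quadratic form; both closures are valid since the determinant vanishes and the trace is negative. Your attention to the domain restriction $b_n>0$ is actually a point the paper glosses over, and your remark that the binary association reduces the sum to a single active term is harmless but unnecessary, since nonnegative weighted sums of concave functions are concave regardless of how many terms are active.
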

\begin{proof}
the Hessian matrix of $R(p_n,b_n)$ can be given by:
\begin{align}
    &Hessian(R) = \nonumber\\
        &\left[ \begin{array}{cc}
            -\frac{\sum_{m \in \mathcal{M}} \chi_{n,m} g_{n,m}^2}{\ln2 \cdot b_n \sigma^4(1+\frac{g_{n,m}p_n}{\sigma^2 b_n })^2} & \frac{\sum_{m \in \mathcal{M}} \chi_{n,m} g_{n,m}^2p_n}{\ln2 \cdot (\sigma^2 b_n+g_{n,m}p_n)^2} \\
            \frac{\sum_{m \in \mathcal{M}} \chi_{n,m} g_{n,m}^2p_n}{\ln2 \cdot (\sigma^2 b_n+g_{n,m}p_n)^2} & -\frac{\sum_{m \in \mathcal{M}} \chi_{n,m} g_{n,m}^2p_n^2}{\ln2 \cdot b_n (\sigma^2 b_n + g_{n,m}p_n)^2} 
            \end{array} 
        \right ].\\[-5pt] \nonumber
\end{align}
Considering any vector $\boldsymbol{x}$ represented as $\boldsymbol{x} = [x_1,x_2]^\intercal$ belonging to the two-dimensional real space $\mathbb{R}^2$, one can note the following observations:
\begin{align}
    \boldsymbol{x}^\intercal Hessian(R) \boldsymbol{x} = - \frac{\sum_{m \in \mathcal{M}} \chi_{n,m} g_{n,m}^2 (b_n x_1 \!-\! p_n x_2)^2}{\ln2 \cdot b_n (b_nN_0\!+\!g_{n,m}p_n)^2} \leq 0.\nonumber
\end{align}
Hence, $Hessian(R)$ is classified as a negative semidefinite matrix.
Consequently, $R(p_n,b_n)$ exhibits concavity with respect to both $p_n$ and $b_n$.
\end{proof}
The primary objective function of Sub-problem \uppercase\expandafter{\romannumeral2} consists of a series of fractional functions. 
It is evident that the product $p_n d_n$ is convex, 
and as stated in Lemma 1, $R(p_n,b_n)$ is concave with respect to both $p_n$ and $b_n$.
Owing to these characteristics, problem (\ref{SP1_v2}) can be reformulated into a problem with a subtractive form based on the next theorem.
\begin{thm}
If $[\boldsymbol{p^*,b^*,\beta^*}]$ is a globally optimal solution to problem (\ref{SP1_v2}), then
\begin{align}
    \beta_n^* = \frac{p_n^*d_n}{\sum_{m \in \mathcal{M}} \chi_{n,m} \cdot R(p_n^*,b_n^*)},\label{beta}
\end{align}
and there exists 
\begin{align}
    \nu = \frac{\omega_e}{R(p_n^*,b_n^*)}, \label{nu}
\end{align}
such that $[\boldsymbol{p^*,b^*}]$ is the solution to the following optimization problem where $\beta_n=\beta^*_n$:       
\begin{subequations}\label{SP1_v3}
\begin{align}
&\min_{\boldsymbol{p,b}} \sum_{n \in \mathcal{N}} \nu_n \Big(p_n d_n -\beta_{n} \sum_{m \in \mathcal{M}} \chi_{n,m} R(p_n,b_n)\Big)\tag{\ref{SP1_v3}},\\
& \text{s.t.~(\ref{P1_C_pn}),~(\ref{P1_C_bn_2})}.\nonumber
\end{align}
\end{subequations}
\end{thm}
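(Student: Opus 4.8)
The plan is to prove this by invoking the Karush--Kuhn--Tucker (KKT) conditions for problem~(\ref{SP1_v2}) and then matching them against the stationarity system of problem~(\ref{SP1_v3}). The crucial enabler is Lemma~1: since $R(p_n,b_n)$ is concave and $\beta_n\geq 0$, the term $-\beta_n R(p_n,b_n)$ is convex, and $p_nd_n$ is convex, so both problems are convex programs in $(\boldsymbol{p},\boldsymbol{b})$ once $\boldsymbol{\beta}$ is fixed. This means the KKT conditions are not merely necessary but sufficient for global optimality, which is exactly what lets me conclude that a point satisfying them solves~(\ref{SP1_v3}).

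First I would form the Lagrangian of~(\ref{SP1_v2}), attaching a multiplier $\nu_n\geq 0$ to each constraint~(\ref{C_beta}) and further multipliers to the shared constraints~(\ref{P1_C_pn}) and~(\ref{P1_C_bn_2}). The key observation is that $\beta_n$ appears only through $\omega_e\beta_n$ in the objective and $-\nu_n R(p_n,b_n)\beta_n$ in the constraint term, so the stationarity condition $\partial\mathcal{L}/\partial\beta_n=0$ collapses to $\omega_e-\nu_n R(p_n^*,b_n^*)=0$, which yields $\nu_n=\omega_e/R(p_n^*,b_n^*)$, establishing~(\ref{nu}). Because $\omega_e>0$ and $R(p_n^*,b_n^*)>0$, we have $\nu_n>0$, so complementary slackness forces constraint~(\ref{C_beta}) to be active at the optimum; setting $p_n^*d_n-R(p_n^*,b_n^*)\beta_n^*=0$ then gives $\beta_n^*=p_n^*d_n/R(p_n^*,b_n^*)$, which is~(\ref{beta}).

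The final and most delicate step is to show that $[\boldsymbol{p^*},\boldsymbol{b^*}]$ solves~(\ref{SP1_v3}) with $\boldsymbol{\beta}=\boldsymbol{\beta^*}$. I would write out the $p_n$- and $b_n$-stationarity conditions for~(\ref{SP1_v2}), which involve the gradient of $\nu_n\big(p_nd_n-\beta_n R(p_n,b_n)\big)$ plus the gradients of the multiplier terms for~(\ref{P1_C_pn}) and~(\ref{P1_C_bn_2}), and then observe that the objective of~(\ref{SP1_v3}) is precisely $\sum_{n}\nu_n\big(p_nd_n-\beta_n R(p_n,b_n)\big)$ over the same feasible set. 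Once $\nu_n$ and $\beta_n$ are frozen at the values derived above, the $(p_n,b_n)$-stationarity equations of the two problems become formally identical. The main obstacle will be verifying that the multipliers for the shared constraints~(\ref{P1_C_pn}) and~(\ref{P1_C_bn_2}) can be taken to be the same in both problems; this holds because those constraint functions and their gradients are unchanged across the two formulations, so the multipliers inherited from~(\ref{SP1_v2}) certify stationarity in~(\ref{SP1_v3}) as well. Invoking sufficiency of KKT for the convex program~(\ref{SP1_v3}) then completes the argument.
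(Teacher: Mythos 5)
Your proposal is correct in substance, but it takes a genuinely different route from the paper: the paper's entire proof is a one-line citation of Lemma~2.1 of Jong's work on sum-of-ratios fractional programming, whereas you reconstruct the mechanism behind that lemma directly --- stationarity of the Lagrangian of (\ref{SP1_v2}) in $\beta_n$ yielding $\nu_n=\omega_e/R(p_n^*,b_n^*)$, positivity of $\nu_n$ plus complementary slackness forcing (\ref{C_beta}) to be active (hence the formula for $\beta_n^*$), and the observation that the $(p_n,b_n)$-stationarity system of (\ref{SP1_v2}) with $\nu_n,\beta_n$ frozen is exactly the KKT system of (\ref{SP1_v3}), whose convexity (via Lemma~1 and the linearity of (\ref{P1_C_pn}), (\ref{P1_C_bn_2})) makes KKT sufficient. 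This buys self-containedness and actually explains where (\ref{nu}) and the expression for $\beta_n^*$ come from, which the paper's citation leaves opaque. Two points need patching. First, you assert that ``both problems are convex programs in $(\boldsymbol{p},\boldsymbol{b})$ once $\boldsymbol{\beta}$ is fixed'' and then lean on this for KKT; but the optimum of (\ref{SP1_v2}) is taken \emph{jointly} over $(\boldsymbol{p},\boldsymbol{b},\boldsymbol{\beta})$, and that joint problem is non-convex (the paper says so explicitly, because of the product $r_{n,m}\beta_n$ in (\ref{C_beta})). Convexity therefore cannot deliver the \emph{necessity} of the KKT conditions at $[\boldsymbol{p^*,b^*,\beta^*}]$; you must invoke a constraint qualification for the non-convex problem. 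This is repairable: MFCQ holds, since increasing $\beta_n$ strictly decreases each active constraint (\ref{C_beta}) while leaving (\ref{P1_C_pn}) and the linear, independent equality constraints (\ref{P1_C_bn_2}) untouched, but the step should be stated rather than implied. Second, the inference $\nu_n>0$ (which drives your complementary-slackness step) silently uses $\omega_e>0$ and $R(p_n^*,b_n^*)>0$; both hold in this model but should be made explicit. With those two repairs, your argument is a complete and more informative substitute for the paper's appeal to the cited lemma.
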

\begin{proof}
We can obtain the proof after applying Lemma 2.1 of~\cite{jong2012efficient} to the optimization problem in (\ref{SP1_v2}).
\end{proof}
Theorem 1 clearly establishes that problem (\ref{SP1_v3}) has an optimal solution $[\boldsymbol{p^*,b^*}]$ that is consistent with the solution of problem (\ref{SP1_v2}).
As a result, an initial step involves providing values for $[\boldsymbol{\mu,\beta}]$ to find a solution $[\boldsymbol{p,b}]$ by resolving the problem (\ref{SP1_v3}). Following this, the subsequent step is to determine $[\boldsymbol{\mu,\beta}]$ as per equations (\ref{beta}) and (\ref{nu}). 
This process can be conceptualized as having two layers: the first step acts as an internal cycle dedicated to addressing problem problem (\ref{SP1_v3}), while the second step serves as the external cycle, focused on pinpointing the optimal values of $[\boldsymbol{\mu,\beta}]$.
The sum $\sum_{m \in \mathcal{M}} \chi_{n,m} R(p_n,b_n)$ being concave with respect to both $p_n$ and $b_n$ leads to the objective function of problem (\ref{SP1_v3}) being convex. 
Additionally, the constraints specified in (\ref{P1_C_pn}) and (\ref{P1_C_bn_2}) are all convex in nature. 
Therefore, this categorizes problem (\ref{SP1_v3}) as a convex optimization problem, and we can use standard convex solvers such as CVX~\cite{grant2014cvx} to solve it.
\vspace{-5pt}\section{Simulation}
In this section, we present the simulation results.
Firstly, we introduce the simulation settings.
Next, we compare the proposed algorithm with other schemes.

\subsection{Simulation Settings}
we investigate a network setup covering a 1000 meters by 1000 meters area, which includes 20 vehicles and 5 Road Side Units (RSUs). 
This configuration is designed to represent a typical urban vehicular network environment. The wireless channel's path loss within this network is calculated using the formula \mbox{$128.1 + 37.6\log_{10}(\texttt{distance})$}, where \texttt{distance} refers to the Euclidean distance between each vehicle and the nearest RSU. 
Additionally, our model incorporates a standard deviation of 8 dB for shadow fading.
The Gaussian noise power $\sigma^2$ is -134dBm.
The total bandwidth for each RSU $b_m^{max}$ is 20 MHz.
The maximum transmission power for vehicles, denoted as $p_n^{max}$, is set at 20 Watts.

\subsection{Comparison with Other Schemes}
In this section, we compare the proposed method with two other schemes:
\begin{enumerate}
    \item \textbf{\underline{R}andom \underline{L}LM offloading with \underline{O}ptimized \underline{R}esource allocation (RLOR).} In this scheme, the vehicles offload random LLM tasks to RSUs, while the computation and communication resource allocation is optimized. 
    \item \textbf{\underline{R}andom \underline{R}esource allocation with \underline{O}ptimized \underline{L}LM offloading (RROL).}  In this scheme, the LLM offloading decision is optimized, while the computation and communication resources are randomly allocated.
\end{enumerate}


Figure \ref{fig_time_w} illustrates the impact of varying weighting factors for delay on the completion time of tasks within the network. 
As the weighting factor $\omega_t$ increases, there is a noticeable decrease in the completion time across all three methods evaluated in the study. 
This trend is a direct consequence of the system placing greater emphasis on minimizing completion time in its operational priorities.
Among the three methods, the performance of RLOR and RROL approaches are worst because they adopt random LLM offloading decision or random resource allocation scheme.
The proposed method outperforms RLOR and RROL in the completion time, demonstrating the effectiveness of the proposed algorithm.

\begin{figure}[tbp]
\centering
\includegraphics[width=0.43\textwidth]{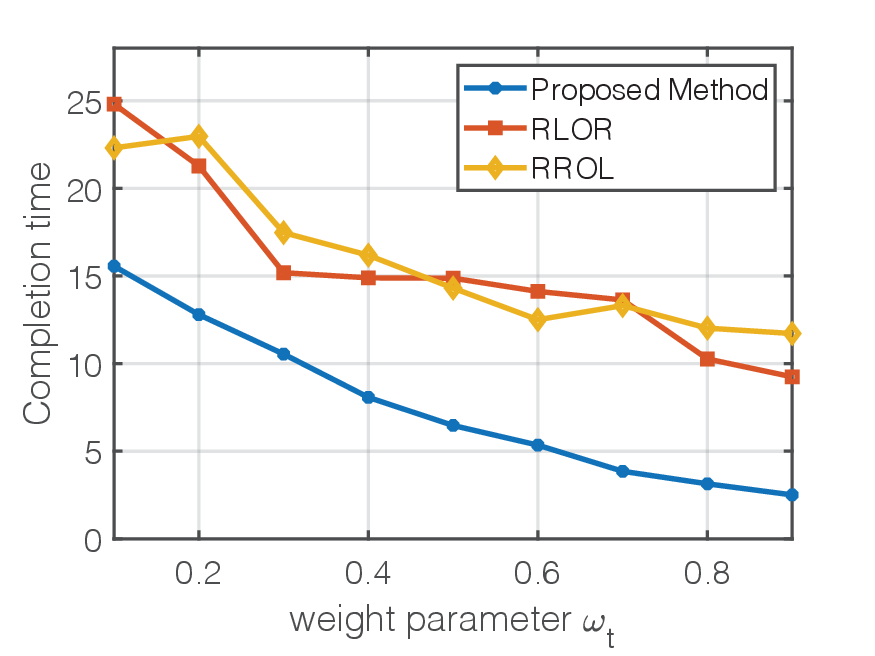}
\vspace{-8pt}\caption{The completion time under different delay weighting factor.}
\label{fig_time_w}\vspace{-15pt}
\end{figure}

\begin{figure}[tbp]
\centering
\includegraphics[width=0.43\textwidth]{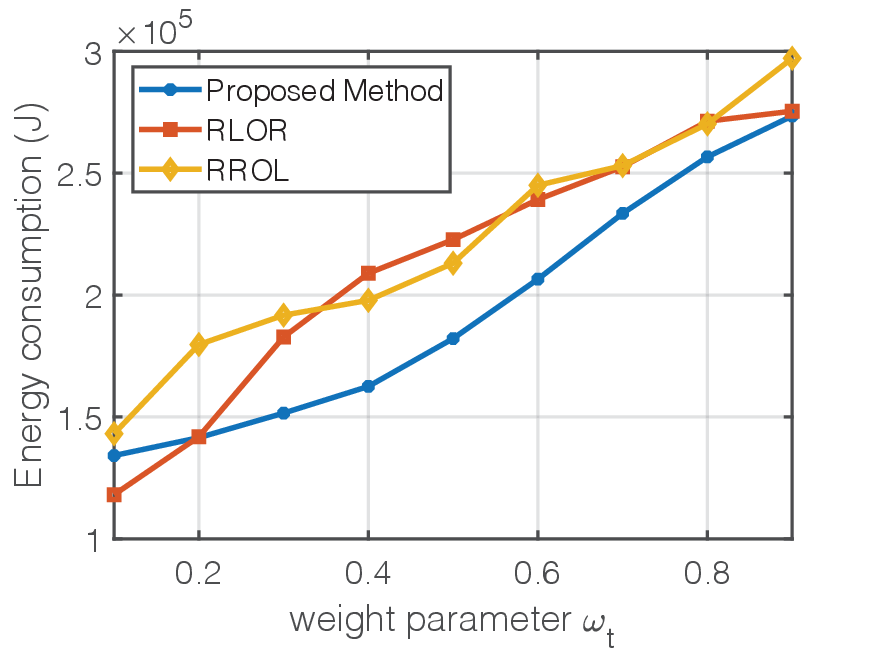}
\vspace{-8pt}\caption{The energy consumption under different delay weighting factor.}
\label{fig_energy_w}\vspace{-15pt}
\end{figure}

In Figure \ref{fig_energy_w}, we present a detailed analysis of the energy consumption under varying weighting factors.
As the weighting factor $\omega_t$ increases, the emphasis within the system shifts more towards minimizing delay, consequently reducing the priority given to energy efficiency. 
This shift in focus is reflected in the increased energy consumption observed across all three methods being compared.
Although in Figure \ref{fig_energy_w}, when $\omega_t$ is from 0.1 to 0.2, the energy consumption of RLOR is almost the same with or even less than the proposed method, the proposed method reduces the completion time by approximately 40\% than RLOR.
The results clearly demonstrate that the proposed method exhibits superior performance in terms of energy consumption when compared to the RLOR and RROL methods.
This noteworthy outcome underscores the effectiveness of the algorithm we proposed. 
The proposed method's ability to minimize energy consumption, while maintaining a short completion time can help to achieve an efficient vehicular communication system.

\vspace{-2pt}\section{Conclusion}\vspace{-1pt}
In this paper, we investigated the integration of vehicular technology with LLMs in 6G vehicular networks.
To address the computational challenges posed by LLMs, we proposed a novel solution where vehicles process the initial few layers of LLM computations locally, combined with offloading more demanding tasks through edge computing for more efficient operation in 6G environments.
We identified and formulated a multi-objective optimization problem, aiming to balance the computational demands of LLMs with the resource constraints inherent in vehicular networks. These are crucial considerations as we move towards the 6G era.
To effectively address this problem, we divided it into two subproblems, allowing for a more efficient solution.
Simulation results demonstrated a significant improvement of the proposed algorithm in resource allocation and energy efficiency, showcasing the potential of LLMs in enhancing vehicular network capabilities. 

\vspace{-2pt}\section*{Acknowledgement}\vspace{-1pt}

This research is supported partly by the Singapore Ministry of Education Academic Research Fund under Grant Tier 1 RT5/23, Grant Tier 1 RG90/22, Grant Tier 1 RG97/20, Grant Tier 1 RG24/20 and Grant Tier 2 MOE2019-T2-1-176; partly by the Nanyang Technological University (NTU)-Wallenberg AI, Autonomous Systems and Software Program (WASP) Joint Project; and partly by Imperial-Nanyang Technological University Collaboration Fund INCF-2024-008.

\bibliographystyle{IEEEtran}
\bibliography{related}

\end{document}